\newtheorem{theorem}{Theorem}
\newtheorem{definition}{Definition}
\newtheorem{lemma}{Lemma}
\newcommand{\eq}[1]{\hyperref[eq:#1]{(\ref*{eq:#1})}}
\renewcommand{\sec}[1]{\hyperref[sec:#1]{Section~\ref*{sec:#1}}}
\newcommand{\thm}[1]{\hyperref[thm:#1]{Theorem~\ref*{thm:#1}}}
\newcommand{\lem}[1]{\hyperref[lem:#1]{Lemma~\ref*{lem:#1}}}
\newcommand{\prop}[1]{\hyperref[prop:#1]{Proposition~\ref*{prop:#1}}}
\newcommand{\cor}[1]{\hyperref[cor:#1]{Corollary~\ref*{cor:#1}}}
\newcommand{\fig}[1]{\hyperref[fig:#1]{Figure~\ref*{fig:#1}}}
\newcommand{\app}[1]{\hyperref[app:#1]{Appendix~\ref*{app:#1}}}
\newcommand{\defi}[1]{\hyperref[defi:#1]{Definition~\ref*{defi:#1}}}
\newcommand{\clm}[1]{\hyperref[clm:#1]{Claim~\ref*{clm:#1}}}
\begin{document}

%+Title
\title{Quantum Adversary (Upper) Bound\footnote{Conference Version Appeared in ICALP 2012: Automata, Languages, and Programming
Lecture Notes in Computer Science Volume 7391, 2012, pp 557-568}}
\author{Shelby Kimmel\\
Center for Theoretical Physics, \\
Massachusetts Institute of Technology\\
\texttt{skimmel@mit.edu}}

%\thanks{Email: skimmel@mit.edu. }
%Center for Theoretical Physics, Massachusetts Institute of Technology}

\date{}
%\date{\today}

\maketitle
%-Title
\begin{abstract}
We describe a method to upper bound the quantum query complexity of  Boolean formula evaluation problems, using fundamental theorems about the general adversary bound. This nonconstructive method can give an upper bound on query complexity without producing an algorithm. For example, we describe an oracle problem which we prove (non-constructively) can be solved in $O(1)$ queries, where the previous best quantum algorithm uses a polylogarithmic number of queries. We then give an explicit $O(1)$-query algorithm for this problem based on span programs.
\end{abstract}

\begin{comment}
\thispagestyle{fancy}
\rhead{MIT-CTP 4242}
\renewcommand{\headrulewidth}{0pt}
\renewcommand{\footrulewidth}{0pt}
\end{comment}

%%%%%%%%%%%%%%%%%%%%%%%%%%%%%%%%%%%%%%
\section{Introduction} \label{sec:intro}
%%%%%%%%%%%%%%%%%%%%%%%%%%%%%%%%%%%%%%
The general adversary bound has proven to be a powerful concept in quantum computing. Originally formulated as a lower bound on the quantum query complexity of Boolean functions \cite{Hoyer2007}, it was proven to be a tight bound 
both for the query complexity of evaluating discrete finite functions and for 
the query complexity of the 
more general problem
of state conversion \cite{Randothers2}.
The general adversary bound is the culmination of a series of adversary methods \cite{Ambainis2000,Ambainis2003}. %Barnum2003. 
While the adversary method in its various forms has been useful in finding lower bounds on quantum query complexity \cite{Ambainis2010,Hoyer2008,Reichardtorigin}, the general adversary bound itself can be difficult to apply, as the quantity for even simple, few-bit functions must usually be calculated numerically \cite{Hoyer2007,Reichardtorigin}. 

One of the nicest properties of the general adversary bound is that it behaves well under composition \cite{Randothers2}. This fact has been used to lower bound the query complexity of evaluating composed total functions, and to create optimal algorithms for composed total functions \cite{Reichardtorigin}. Here, we extend one of the composition results to partial Boolean
functions, and use it to upper bound the query complexity of Boolean functions. We do this by obtaining an {\it{upper}} bound on the general adversary bound.

Generally, finding an upper bound on
the general adversary bound is just as difficult as finding an algorithm, as they are dual problems \cite{Randothers2}. However, using the composition property of the general adversary bound, when given an algorithm for a Boolean function $f$ composed $d$ times, we obtain an upper bound on the general adversary bound of $f$. Due to the tightness of the general adversary bound and query complexity, this  procedure
gives an upper bound on the query complexity 
of $f$, but because it is nonconstructive, it doesn't give any hint as to what the corresponding algorithm for $f$ might look like. The procedure is a bit counter-intuitive: we obtain information about an algorithm for a simpler function by creating an algorithm for a more complex function. This is similar in spirit to the tensor power trick, where an inequality between two terms is proven by considering tensor powers of those terms\footnote{See Terence Tao's blog, {\it{What's New}} ``Tricks Wiki article: The tensor power trick," http://terrytao.wordpress.com/2008/08/25/tricks-wiki-article-the-tensor-product-trick/}.   

We describe a class of oracle problems called \textsc{Constant-Fault Direct Trees} (introduced by Zhan et al. \cite{us}), for which this method proves the existence of an $O(1)$ query algorithm. While this method does not give an explicit algorithm, we show that a span program algorithm achieves this bound. The previous best
algorithm for \textsc{Constant-Fault Direct Trees} has a query complexity that is polylogarithmic in the size of the problem. 

In \sec{proofsec} we describe the upper bound on the general adversary bound.
In \sec{example} we apply this bound to \textsc{Constant-Fault Direct Trees}
 and prove the 
existence of a constant query algorithm. In \sec{spanpsec} we describe the span program based quantum
algorithm for \textsc{Constant-Fault Direct Trees}.

%%%%%%%%%%%%%%%%%%%%%%%%%%%%%%%%%%%%%%%%%%%%%%%%%%%%%%%
\section{A Nonconstructive Upper Bound on Query Complexity} \label{sec:proofsec}
%%%%%%%%%%%%%%%%%%%%%%%%%%%%%%%%%%%%%%%%%%%%%%%%%%%%%%%

Our procedure for creating a nonconstructive upper bound on query complexity relies on the fact that the general adversary bound behaves well under composition and is a tight lower bound on quantum query complexity. The standard definition of the general adversary bound is not necessary 
for our purposes, but can be found in \cite{Hoyer2008}, and an alternate
definition appears in \app{comp}.

Our procedure applies to Boolean functions. A function $f$ is Boolean if $f:S\rightarrow\{0,1\}$ with 
$S\subseteq\{0,1\}^n$.
Given a Boolean function $f$ and a natural number $d$,
we define $f^d$, ``$f$ composed $d$ times,'' recursively as 
$f^d=f\circ(f^{d-1},\dots,f^{d-1})$, where $f^1=f$. 

Now we state the main result:
\begin{theorem}
\label{thm:maintheorem}
Suppose we have a (possibly partial) Boolean function $f$ that is composed $d$ times, $f^d$, and a quantum algorithm for $f^d$ that requires $O(J^d)$ queries. Then $Q(f)=O(J)$, where $Q(f)$ is the bounded-error quantum query complexity of $f$. 
\end{theorem}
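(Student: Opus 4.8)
The plan is to exploit two facts mentioned in the introduction: first, the general adversary bound $\mathrm{ADV}^\pm(f)$ is tight with respect to bounded-error quantum query complexity, so that $Q(f) = \Theta(\mathrm{ADV}^\pm(f))$ (and similarly for $f^d$); and second, the adversary bound composes multiplicatively, so that $\mathrm{ADV}^\pm(f^d) = \Theta\bigl((\mathrm{ADV}^\pm(f))^d\bigr)$, with the crucial point being that this composition identity — known for total functions — extends to the partial boolean functions we care about (this extension is the ``one of the composition results'' the paper promises to generalize). I would structure the argument as a short chain of (in)equalities linking the hypothesis on $Q(f^d)$ down to a bound on $Q(f)$.

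Concretely, the steps I would carry out in order are as follows. Let $a = \mathrm{ADV}^\pm(f)$. First, invoke the lower-bound direction of tightness to get $\mathrm{ADV}^\pm(f^d) = O(Q(f^d)) = O(J^d)$, using the hypothesis that $f^d$ has an $O(J^d)$-query algorithm. Second, invoke the composition theorem (in its partial-function form) to write $\mathrm{ADV}^\pm(f^d) = \Theta(a^d)$; combining these gives $a^d = O(J^d)$, i.e. there is a constant $c$ with $a^d \le c\, J^d$ for all large $d$. Third, take $d$-th roots: $a \le c^{1/d} J$, and since $c^{1/d} \to 1$ as $d \to \infty$, conclude $a = O(J)$ (here one must treat $d$ as a free parameter we are entitled to send to infinity, since the hypothesis is stated for the composed function). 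Fourth and finally, apply the upper-bound direction of tightness, $Q(f) = O(\mathrm{ADV}^\pm(f)) = O(a) = O(J)$, which yields the theorem.

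I expect the main obstacle to be the second step: justifying the composition identity $\mathrm{ADV}^\pm(f^d) = \Theta(a^d)$ for \emph{partial} boolean functions. The cited composition result \cite{Randothers2,Reichardtorigin} is stated for total functions, and the whole contribution advertised in the introduction is extending it to partial $f$. The subtlety is that when $f$ is partial, composing it requires the inner outputs to land in the promised domain $S$, so one must check that the promise is respected under composition and that the adversary semidefinite program still factorizes appropriately; the $\Omega$ direction (that $\mathrm{ADV}^\pm(f^d) \gtrsim a^d$) and the $O$ direction may need slightly different handling. A secondary, easier point to get right is the asymptotics of the root-extraction step: I would want the $O(J)$ conclusion to hold uniformly rather than degrade with $d$, so I would phrase it as: for the constant $c$ above, $\limsup_d a / J \le \limsup_d c^{1/d} = 1$, giving a clean $a = O(J)$ independent of $d$.
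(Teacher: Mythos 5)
Your proposal is correct and follows essentially the same route as the paper: bound $\mathrm{ADV}^{\pm}(f^d)$ above by $O(J^d)$ via tightness, bound it below by $(\mathrm{ADV}^{\pm}(f))^d$ via the composition lemma (whose extension to partial functions is indeed the real work, done in the paper's Appendix A), take $d$-th roots, and apply tightness again. The only cosmetic differences are that you state the composition result as a $\Theta$ when only the lower-bound direction is needed, and your limit argument for the root-extraction step is unnecessary since $c^{1/d}\le c$ already gives $a=O(J)$ for the single fixed $d$ supplied by the hypothesis.
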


\noindent (For background on bounded-error quantum query complexity and quantum algorithms, see \cite{Ambainis2000}.) There are seemingly similar results in the literature; for example, Reichardt proves in \cite{Reichardtrefl} that the query complexity of a function composed $d$ times, when raised to the $1/d^{th}$ power, is equal to the adversary bound of the function, in the limit that $d$ goes to infinity. This result gives insight into the exact query complexity of a function, and its relation to the general adversary bound. In 
contrast, our result is a {\it{tool}} for upper bounding query complexity, possibly without gaining any knowledge of the exact query complexity of the function.

One might think that \thm{maintheorem} is useless because an algorithm for $f^d$ usually comes from
composing an algorithm for $f$. If $J$ is the query complexity of the algorithm for $f$, one expects the query complexity of the
resulting algorithm for $f^d$ to be at least $J^d$. In this case, \thm{maintheorem} gives no new insight. Luckily for us, composed quantum
algorithms do not always follow this scaling. If there is a quantum algorithm for $f$ that uses  $J$ queries, where $J$ is not optimal (i.e. is larger than the true bounded error quantum query complexity of $f$), then the number of queries used when the algorithm is composed $d$ times can be much less than $J^d$. If this is the case, and if the non-optimal algorithm for $f$ is the best known,  \thm{maintheorem} promises the existence of an algorithm for $f$ that uses fewer queries than the best known algorithm, but, as \thm{maintheorem} is nonconstructive, it gives no information as to the form of the algorithm. 

We need two lemmas to prove \thm{maintheorem}:
\begin{restatable}{lemma}{lemonee}
\label{lem:lemone}
For any Boolean function $f:S\rightarrow\{0,1\}$ with 
$S\subseteq\{0,1\}^n$ and natural number $d$,
\begin{equation} \label{eq:advcomp}
{\rm{ADV}}^{\pm}(f^d)\geq ({\rm{ADV}}^{\pm}(f))^d.
\end{equation}
\end{restatable}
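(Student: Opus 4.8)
The plan is to induct on $d$, so that the whole statement reduces to a single composition step. The base case $d=1$ is an identity. For the inductive step I would write $f^d = f \circ (f^{d-1},\dots,f^{d-1})$ and apply a one-level composition inequality of the form
\[
{\rm ADV}^{\pm}(f\circ(g,\dots,g)) \ge {\rm ADV}^{\pm}(f)\cdot {\rm ADV}^{\pm}(g),
\]
taking $g = f^{d-1}$; combined with the inductive hypothesis ${\rm ADV}^{\pm}(f^{d-1})\ge ({\rm ADV}^{\pm}(f))^{d-1}$ this yields \eq{advcomp}. So the real content is establishing this single composition inequality when the outer function $f$ is (possibly) partial.

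To prove the composition inequality I would use the dual (maximization) formulation of the adversary bound, in which ${\rm ADV}^{\pm}(f)$ is the supremum over adversary matrices $\Gamma$ — symmetric matrices indexed by the inputs that vanish on every pair $x,y$ with $f(x)=f(y)$ — of the ratio $\|\Gamma\|/\max_i\|\Gamma\circ D_i\|$, where $D_i$ marks the pairs differing in coordinate $i$. Because this is a maximization, any single feasible matrix for the composed function already yields a lower bound on ${\rm ADV}^{\pm}(f\circ(g,\dots,g))$. I would therefore take optimal adversary matrices for $f$ and for $g$ together with their principal eigenvectors and assemble from them a composed adversary matrix $\Gamma'$ on the domain of $f\circ(g,\dots,g)$ — this is exactly the construction underlying the lower-bound direction in Lee et al. The two things to check are (i) that $\Gamma'$ is feasible, i.e.\ vanishes on equal-output pairs of the composed function, which follows from the block structure of the construction together with feasibility of the ingredient matrices, and (ii) that the objective value of $\Gamma'$ equals the product ${\rm ADV}^{\pm}(f)\cdot{\rm ADV}^{\pm}(g)$, which amounts to computing $\|\Gamma'\|$ and each $\|\Gamma'\circ D_i\|$ for the block-structured matrix.

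The main obstacle is that the composition theorem of Lee et al.\ is stated for total functions, whereas I need it for a partial outer function $f$, and I must make sure the construction still makes sense when $f^d$ is defined only on a restricted subset of $\{0,1\}^{n^d}$. The favorable point is that I need only the inequality $\ge$, which comes from exhibiting one feasible point rather than from matching a dual bound: the composed matrix $\Gamma'$ is indexed solely by the actual (possibly partial) domain of the composed function, and the vanishing constraints are built in by construction. The delicate part, and where I expect to spend the most care, is verifying that passing to the restricted domain does not spoil the spectral-norm bookkeeping in step (ii), since restricting to a principal submatrix can lower both $\|\Gamma'\|$ and each $\|\Gamma'\circ D_i\|$. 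I would handle this by choosing the principal eigenvectors to be supported on the restricted domain, so that the test vector witnessing the numerator lives on the valid inputs while the (only shrinking) denominators $\|\Gamma'\circ D_i\|$ work in our favor, preserving the product value. The reverse inequality — which would require composing feasible primal (algorithmic) solutions — is precisely where partiality causes trouble, but since the lemma asks only for $\ge$ we can sidestep it entirely.
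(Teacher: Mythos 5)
Your strategy is essentially the paper's: reduce to the single composition step ${\rm ADV}^{\pm}(f\circ(g,\dots,g))\geq{\rm ADV}^{\pm}(f)\,{\rm ADV}^{\pm}(g)$ by induction, and prove that step by exhibiting one feasible point of a maximization characterization of the adversary bound of the composed function, assembled from optimal points for $f$ and $g$ --- so that partiality only forces a restriction to the valid domain and costs nothing because only the lower-bound direction is needed. Where you diverge is the choice of characterization. You work with the spectral-norm ratio $\|\Gamma\|/\max_i\|\Gamma\circ\Delta_i\|$, i.e.\ you are proposing to repair the H{\o}yer et al.\ \cite{Hoyer2007} proof, which the paper explicitly notes assumes total functions. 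The paper instead follows the semidefinite formulation of Lee et al.\ (objective $W\bullet J$ with constraints $W\circ G=0$, $\Omega\pm W\circ\Delta_i\succeq0$, $\mathrm{Tr}(\Omega)=1$) and handles partiality with ``primed'' matrices zeroed on invalid rows and columns; the payoff is that all bookkeeping reduces to entry sums, preservation of positive semidefiniteness under the block lifting (\lem{positive}), and the balancing condition of \lem{dualbool}, with no spectral-norm estimates at all. Your route has one thin spot: the claim that a test vector supported on the valid domain ``preserves the product value'' of the numerator is precisely the factorized norm computation that must be re-verified when the outer sum runs over $D\subseteq\{0,1\}^n$ rather than all of $\{0,1\}^n$, and it also needs the balanced-eigenvector normalization for the (itself partial) inner function $f^{d-1}$. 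The observation that the denominators only shrink under restriction is correct but is the easy half; carried out honestly, your numerator computation should go through, but it amounts to redoing in spectral language the work the paper does with primed matrices.
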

\noindent  H\o yer et al. \cite{Hoyer2007} prove \lem{lemone} for total Boolean functions\footnote{While the statement of Theorem 11 in \cite{Hoyer2007} seems to apply to partial functions, it is mis-stated; their proof actually assumes total functions.}, and the result is extended to more general
total functions in \cite{Randothers2}. Our contribution is to extend the result in \cite{Randothers2} to partial Boolean functions. While \thm{maintheorem} still holds for total functions, the example we consider later in the paper involves partial functions. The proof of \lem{lemone} closely follows the proof in \cite{Randothers2} and
can be found in \app{comp}. 
\begin{lemma} \emph{(Lee, et al. \cite{Randothers2})}
For any function $f:S\rightarrow E$, with $S\in D^n$, and $E, D$ finite sets, the bounded-error quantum query complexity of $f$, $Q(f)$, satisfies
\begin{equation}
Q(f)=\Theta({\rm{ADV}}^{\pm}(f)). 
\end{equation}
\label{lem:lem2}
\end{lemma}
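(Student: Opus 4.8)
The plan is to prove the two inequalities separately, since $Q(f)=\Theta({\rm{ADV}}^{\pm}(f))$ asserts both a lower bound $Q(f)=\Omega({\rm{ADV}}^{\pm}(f))$ (the adversary method) and a matching algorithmic upper bound $Q(f)=O({\rm{ADV}}^{\pm}(f))$. I would begin by recalling that ${\rm{ADV}}^{\pm}(f)$ is the optimum of a semidefinite program admitting two equivalent formulations. The primal is a maximization over symmetric \emph{adversary matrices} $\Gamma$ indexed by the inputs in $S$, with $\Gamma[x,y]=0$ whenever $f(x)=f(y)$, maximizing $\|\Gamma\|$ subject to $\|\Gamma\circ D_i\|\le 1$ for each coordinate $i$, where $D_i[x,y]=1$ iff $x_i\neq y_i$ and $\circ$ denotes the Hadamard product. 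The dual is a minimization over families of vectors $\{|u_{x,i}\rangle\},\{|v_{x,i}\rangle\}$ satisfying $\sum_{i:\,x_i\neq y_i}\langle u_{x,i}|v_{y,i}\rangle=1$ for every pair with $f(x)\neq f(y)$, minimizing $\max_x\max(\sum_i\||u_{x,i}\rangle\|^2,\sum_i\||v_{x,i}\rangle\|^2)$. SDP duality equates the two optima, and I would use whichever is convenient for each direction.

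For the lower bound $Q(f)=\Omega({\rm{ADV}}^{\pm}(f))$, I would use the primal formulation and a progress-measure argument. Fix an optimal $\Gamma$ with top eigenvector $|\delta\rangle$, run a hypothetical $T$-query algorithm in superposition over all inputs $x$ weighted by $|\delta\rangle$, and track $W_t=\langle\psi_t|(\Gamma\otimes I)|\psi_t\rangle$ on the joint algorithm-oracle state. The three estimates are: (i) $W_0=\|\Gamma\|$, since before any query the branches are still aligned with $|\delta\rangle$; (ii) each oracle call changes $W_t$ by at most $2\max_i\|\Gamma\circ D_i\|=2$, because a single query touches only one input coordinate and its effect is bounded by the constrained Hadamard products; and (iii) bounded error forces $W_T$ to have dropped by $\Omega(\|\Gamma\|)$, since inputs with distinct $f$-values must end in nearly orthogonal reduced states. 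Combining these yields $T\ge\Omega(\|\Gamma\|)=\Omega({\rm{ADV}}^{\pm}(f))$.

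The harder direction, and the \emph{main obstacle}, is the upper bound $Q(f)=O({\rm{ADV}}^{\pm}(f))$, which requires converting a dual feasible solution into an actual algorithm. Taking an optimal dual solution of objective value $W={\rm{ADV}}^{\pm}(f)$, I would assemble the vectors $|u_{x,i}\rangle,|v_{x,i}\rangle$ into a span-program-like structure and define two reflections: a reflection $R_{\mathrm{in}}$ built from the oracle that checks consistency of a candidate witness with the queried input bits, and a fixed reflection $R_\Gamma$ determined by the SDP vectors. The algorithm performs phase estimation of the product $R_{\mathrm{in}}R_\Gamma$ on a witness state. The dual normalization constraint guarantees that inputs of one output value place the witness (nearly) in the $+1$ eigenspace of the product, while the other value rotates it away with a spectral gap of order $1/W$; phase estimation to precision $O(1/W)$ then distinguishes the cases in $O(W)$ oracle calls. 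The delicate points are deriving this spectral gap from the SDP constraints — an effective-spectral-gap or ``approximate zero eigenvalue'' lemma — and bounding the resulting error so the procedure is genuinely bounded-error within $O(W)$ queries.

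Finally I would verify the full generality of the statement, namely arbitrary finite alphabets $D,E$ and partial $f$. Partiality is immediate, since every SDP constraint and the progress measure range only over $x\in S$. Non-boolean output requires no new idea because neither the dual SDP nor the reflection construction uses boolean-ness of $E$: the constraint is imposed exactly on pairs with $f(x)\neq f(y)$, so the same reflection-product algorithm decides, for each candidate value $e\in E$, whether $f(x)=e$, and the correct output is read off after $O(W)$ queries total. I expect this last step to be essentially bookkeeping; the genuine technical work, and where I would concentrate my effort, is the spectral-gap analysis of $R_{\mathrm{in}}R_\Gamma$ in the upper-bound direction.
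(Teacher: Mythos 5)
You could not have known this, but the paper does not prove this lemma at all: it is imported verbatim, as a black box, from Lee et al.\ \cite{Randothers2}, and the paper's own adversary-related work is confined to extending the \emph{composition} inequality (\lem{lemone}) to partial functions — the tightness statement above is used only as a citation. So your proposal has to be judged against the proof in the cited reference, and against that benchmark your outline faithfully reproduces its architecture: the lower bound is exactly the negative-weight adversary progress argument of H{\o}yer, Lee and \v{S}palek (progress measure built from an optimal primal $\Gamma$, per-query change bounded by $2\max_i\|\Gamma\circ D_i\|\le 2$ — this bound for \emph{negative} weights is itself a nontrivial lemma, not a one-line estimate — and bounded error forcing a total drop of $\Omega(\|\Gamma\|)$), while the upper bound is indeed obtained by converting a dual feasible solution into a product of two reflections and running phase estimation, with the effective spectral gap lemma supplying the $\Theta(1/W)$ phase resolution; you have correctly located the technical weight of the whole theorem there. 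Two caveats. Your handling of non-boolean outputs, deciding $f(x)=e$ separately for each $e\in E$, costs a factor $|E|$ in queries; this is harmless for a fixed finite alphabet and irrelevant to this paper (which only ever applies the lemma to boolean $f^d$), but it is not how \cite{Randothers2} proceed — they get alphabet-independent constants by working in the state-conversion framework directly. And your remark that partiality is ``immediate'' is correct for \emph{this} lemma (both SDP formulations quantify only over $x,y\in S$, and the reference proves the tight characterization for partial $f$ from the start), which is precisely why the paper's genuinely new contribution sits in \lem{lemone}, where partiality does require care, rather than here.
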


We now prove \thm{maintheorem}:
\begin{proof}
Given an algorithm for $f^d$ that requires $O(J^d)$ queries, by \lem{lem2},
\begin{equation} \label{eq:eq1}
{\rm{ADV}}^{\pm}(f^{d})=O(J^d).
\end{equation}
Combining Eq. \eq{eq1} and \lem{lemone},
\begin{equation}
({\rm{ADV}}^{\pm}(f))^d=O(J^d).
\end{equation}
Raising both sides to the $1/d^{th}$ power,
\begin{equation}
{\rm{ADV}}^{\pm}(f)=O(J).
\end{equation}
We now have an upper bound on the general adversary bound of $f$. Finally, using \lem{lem2} again,
we obtain 
\begin{equation}
Q(f)=O(J).
\end{equation}
\end{proof}

%%%%%%%%%%%%%%%%%%%%%%%%%%%%%%%%%%%%%%%%%%%%%%%%%%%%%%%
\section{Example where the General Adversary Upper Bound is Useful} \label{sec:example}
%%%%%%%%%%%%%%%%%%%%%%%%%%%%%%%%%%%%%%%%%%%%%%%%%%%%%%%

In this section we describe a function, called the \textsc{1-Fault Nand Tree}, for which \thm{maintheorem} gives a better upper bound on query complexity than any previously known quantum algorithm. 
The \textsc{1-Fault Nand Tree} was proposed by Zhan et al. \cite{us} to obtain a super-polynomial speed-up for a partial Boolean formula, and 
is a specific type of \textsc{Constant-Fault Direct Tree}, which was mentioned in \sec{intro}.
We first define the \textsc{Nand Tree}, and then explain the allowed inputs to the  \textsc{1-Fault Nand Tree}.

The \textsc{Nand Tree} is a complete, binary tree of depth $d$,  where each node is assigned a bit value. The leaves are assigned arbitrary values, and any internal node $v$ is given the value \textsc{nand}$(val(v_1),val(v_2))$,
where $v_1$ and $v_2$ are $v$'s children, and $val(v_i)$ is the value of the node $v_i$. 

To evaluate the \textsc{Nand Tree},
one must find the value of the root given an oracle for the values of the leaves. (The \textsc{Nand Tree} is equivalent to solving \textsc{nand}$^d$, although the composition we will use for \thm{maintheorem} is not the composition of the \textsc{nand} function, but of the \textsc{Nand Tree} as a whole.) For arbitrary inputs, Farhi et al. showed that there exists an optimal quantum algorithm in the Hamiltonian model to solve the \textsc{Nand Tree} in $O(2^{0.5d})$ time \cite{FarhiNAND1}, and this was extended to a standard discrete algorithm with 
quantum query complexity $O(2^{0.5d})$ \cite{Childs2007,Reichardt2010}. Classically, the best algorithm requires $\Omega(2^{0.753d})$ queries \cite{Saks1986}. Here, we consider the \textsc{1-Fault Nand Tree}, which is a
\textsc{Nand Tree} with a promise that the inputs satisfy certain conditions.

\begin{definition}\emph{(\textsc{1-Fault Nand Tree} \cite{us})} \label{defi:faulttree}
Consider a \textsc{Nand Tree} of depth $d$ (as described above). Then to each node $v$, we assign an
integer $\kappa(v)$ such that:
\begin{itemize}
\item $\kappa(v)=0$ for leaf nodes.
\item Otherwise $v$ has children $v_1$ and $v_2$
\subitem If $val(v_1)=val(v_2),$ $\kappa(v)=\max_{i\in\{1,2\}} \kappa(v_i)$, 
\subitem If $val(v_1)\neq val(v_2)$, let $v_i$ be the node such that $val(v_i)=0$. Then $\kappa(v)=1+\kappa(v_i)$.
\end{itemize}
A tree satisfies the $1$-fault condition if $\kappa(v)\leq1$ for any node $v$ in the tree. 
\end{definition}
\noindent{\bf{Notation:}} When a node has one child with value 1 and one child with value 0 ($val(v_1)\neq val(v_2)$),  we call the node $v$ a {\bf{fault}}. (Since \textsc{nand}$(0,1)=$ \textsc{nand}$(1,0)=1$, fault nodes must have value $1$, although not all $1$-valued nodes are faults.) 

The $1$-fault condition is a limit on the amount and location of faults within the tree. In a \textsc{1-Fault Nand Tree}, if a path moving from a root to a leaf encounters any fault node and then passes through the $0$-valued child of the fault node, there can be no further fault nodes on the path. An example of a \textsc{1-Fault Nand Tree} is given in \fig{NANDtree}. 

The condition of the \textsc{1-Fault Nand Tree} may seem strange, 
but it has a nice interpretation when considering the correspondence between
\textsc{Nand Trees} and game trees\footnote{See Scott Aaronson's blog, \emph{Shtetl-Optimized},
 ``NAND now for something completely different," http://www.scottaaronson .com/blog/?p=207}. The \textsc{1-Fault Nand Tree}
 corresponds to a game in which, if both players
play optimally, there is at most one point in the sequence of play where a player's choice affects the outcome of the game. Furthermore, if a player makes
the wrong choice at the decision point, the game again becomes a 
single-decision game, where if both players play optimally for the rest of play,
 there is at most one 
point where a player's choice affects the outcome of the game.

\begin{figure}[!ht] 
\center\includegraphics[width=2.7in]{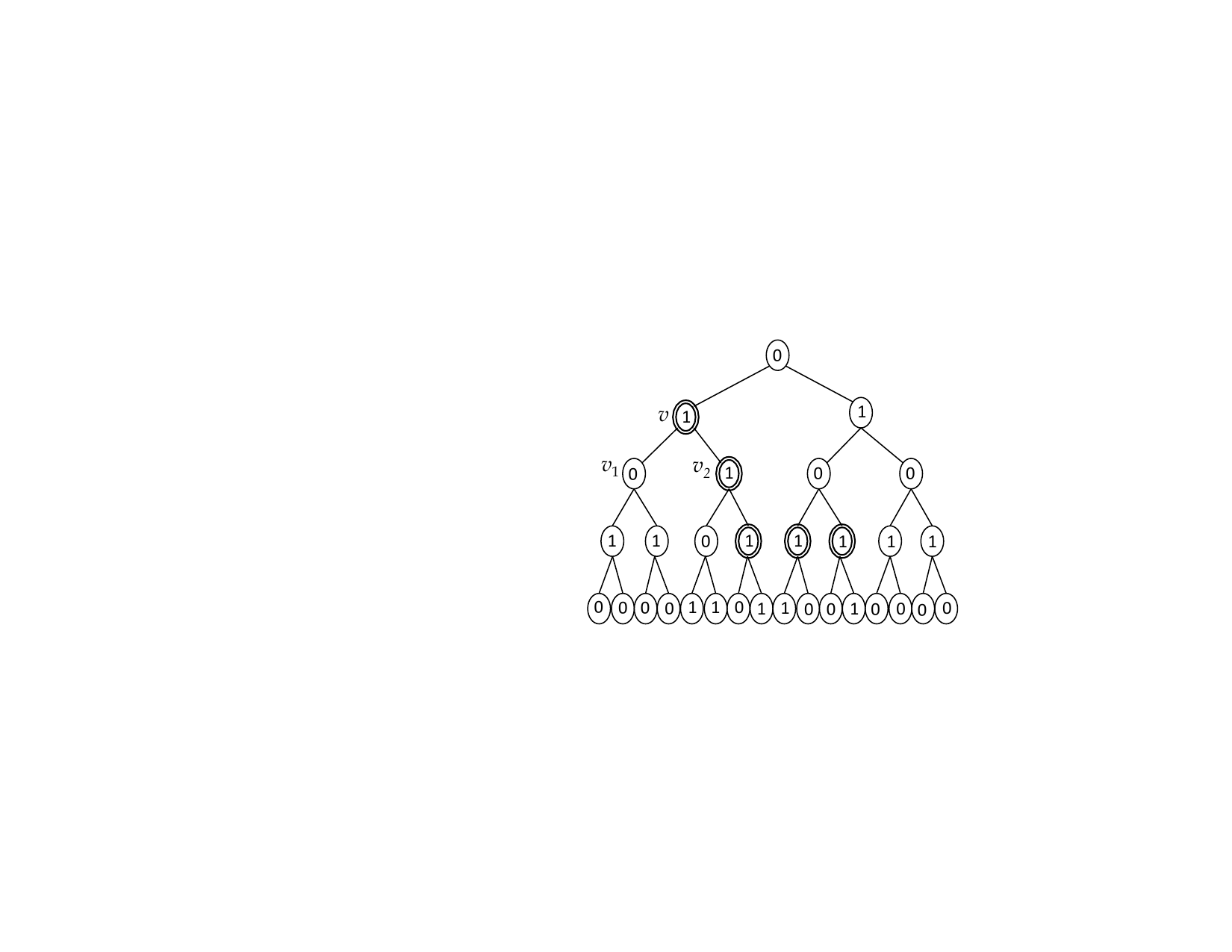}
 \caption{An example of a \textsc{1-Fault Nand Tree} of depth 4. Fault nodes are highlighted by a double circle. The node $v$ is a fault since one of its children ($v_1$) has value $0$, and one ($v_2$) has value $1$. Among $v_1$ and its children, there are no further faults, as required by the $1$-fault condition. There can be faults among $v_2$ and its children, and indeed, $v_2$ is a fault. There can be faults among the $1$-valued child of $v_2$ and its children, but there can be no faults below the $0$-valued child. }
 \label{fig:NANDtree}
  \end{figure}
Zhan et al. \cite{us} describe a quantum algorithm for the $d$-depth \textsc{1-Fault Nand Tree} that requires $O(d^2)$ queries to an oracle for the leaves. However, when the $d$-depth \textsc{1-Fault Nand Tree} is composed $\log d$ times, their
algorithm requires only $O(d^3)$ queries. Here we see an example where the number of queries required by a composed algorithm does not scale exponentially in the number of compositions, which is critical for applying \thm{maintheorem}. Applying \thm{maintheorem} to the algorithm for the \textsc{1-Fault Nand Tree} composed $\log d$ times, we find that an upper bound on the query complexity of the \textsc{1-Fault Nand Tree} is $O(1)$. This is a large improvement over $O(d^2)$ queries. Zhan et al. prove $\Omega(\rm{poly}\log d)$ is a lower bound on the classical query complexity of \textsc{1-Fault Nand Trees}. 
An identical argument can be used to show that \textsc{Constant-Fault Nand Trees} (from \defi{faulttree}, trees satisfying $\kappa(v)\leq c$ with $c$ a constant) have query complexity $O(1)$. 

In fact, Zhan et al. find algorithms for a broad range of trees, where instead of \textsc{nand}, the evaluation tree is composed of a type of Boolean function  called a {\it{direct}} function. A direct function is a generalization of a monotonic Boolean function, and includes functions like majority, threshold, and their negations. For the exact definition, which involves span programs, see \cite{us}. Similarly to \textsc{Constant-Fault Nand Trees}, Zhan et al. give a quantum algorithm for \textsc{Constant-Fault Direct Trees} requiring $O(d^2)$ queries and prove $\Omega(\rm{poly}\log d)$ is a lower bound on the classical query complexity, while
 \thm{maintheorem} can be used to prove the existence of $O(1)$-query algorithms for \textsc{Constant-Fault Direct Trees}.

%%%%%%%%%%%%%%%%%%%%%%%%%%%%%%%%%%%
\section{Span Program Algorithm for \textsc{Constant-Fault Direct Trees}}\label{sec:spanpsec}
%%%%%%%%%%%%%%%%%%%%%%%%%%%%%%%%%%%
The structure of \textsc{Constant-Fault Direct Trees} can be quite complex, and it is not obvious that there should be an $O(1)$-query algorithm. Inspired by the knowledge of the algorithm's existence, thanks to \thm{maintheorem}, we found a span program algorithm for \textsc{Constant-Fault Direct Trees} that requires $O(1)$ queries. It makes sense that the optimal algorithm uses span programs, not just because span programs can always
be used to create optimal algorithms \cite{Randothers2}, but because \thm{maintheorem}
is based on properties of the general adversary bound, and there is strong
duality between the general adversary bound and span programs.

Span programs are linear algebraic representations of Boolean functions, which have an intimate relationship with quantum algorithms. 
In particular, Reichardt proves \cite{Reichardtrefl} that given a span program $P$ for a function $f$, there is a function of the span
program called the witness size, such that one can create a quantum algorithm for 
$f$ with query complexity $Q(f)$ satisfying
\begin{equation}
Q(f)=O(\textsc{witness size}(P))
\label{eq:wsizeqq}
\end{equation}
Thus, creating a span program for a function is equivalent to creating a quantum query algorithm.

 There have been many iterations of span program-based quantum algorithms, 
due to Reichardt and others \cite{Randothers2,Reichardtrefl,Reichardtorigin}. Zhan et al. create algorithms for direct Boolean functions \cite{us} using the span program formulation described in Definition 2.1 in \cite{Reichardtrefl}, one of the earliest versions (we will not 
go into the details of span programs in this paper). Using the more recent advancements
in span program technology, we show here:
\begin{restatable}{theorem}{kfault}\label{thm:kfault}
Given an evaluation tree composed of the direct Boolean function $f$, with the promise that the tree satisfies the $k$-fault condition ($k$ a natural number), there is a quantum algorithm
that evaluates the tree using $O(w^k)$ queries, where $w$ is a constant that depends on $f$. In particular, for a \textsc{Constant-Fault Direct Tree} ($k$ a constant), the algorithm requires $O(1)$ queries.
\end{restatable}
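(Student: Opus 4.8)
The plan is to exhibit an explicit span program for the evaluation tree and to bound its witness size by $O(w^k)$; by Eq.~\eq{wsizeqq} this bound translates directly into the claimed query complexity. So the entire theorem reduces to a combinatorial estimate on a single span-program quantity, with no further appeal to quantum algorithms needed once the witness size is controlled.

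First I would take a span program $P_f$ for the single direct function $f$. Because $f$ is a fixed (constant-arity) direct function, it admits a span program whose positive and negative witness sizes are bounded by a constant $w$ depending only on $f$; this is essentially the span-program construction of \cite{us}, recast in the more recent formalism of \cite{Reichardtrefl}. I would record \emph{both} the positive witness size (the cost of certifying an output value consistent with the direct/monotone structure) and the negative witness size, since the composed bound depends on how these two interact. Next I would assemble a span program $P_T$ for the whole depth-$n$ tree by composing copies of $P_f$ along the tree using Reichardt's span-program composition. The witness size of $P_T$ on a given input is then computed by a recursion up the tree: the quantity assigned to a node is obtained from the quantities of its children, multiplied by a local contribution coming from $P_f$.

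The heart of the argument, and the step I expect to be the main obstacle, is showing that this recursion is governed by the fault count $\kappa(v)$ rather than by the depth $n$. Concretely, I would prove by induction up the tree that the witness size at a node $v$ is $O(w^{\kappa(v)})$, with a case analysis mirroring \defi{faulttree}. At a non-fault node ($val(v_1)=val(v_2)$) both children certify the same value, and the local contribution of $P_f$ lets the larger child witness size pass through essentially unchanged, so the witness size does not grow, matching $\kappa(v)=\max_i\kappa(v_i)$. At a fault node the mixed child values force the more expensive certification, multiplying the witness size by the constant factor $w$, matching $\kappa(v)=1+\kappa(v_i)$ for the $0$-valued child $v_i$. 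Verifying that the non-fault case really is a pass-through with no blow-up, and that a fault contributes exactly one factor of $w$, is delicate: it depends on the precise way the positive and negative witnesses of a direct function combine under composition, and the whole estimate need only be established on inputs satisfying the promise, so the induction must be set up to use the $k$-fault condition at each node rather than worst-case behavior of $f$. Once this is in place, the $k$-fault condition gives $\kappa(v)\le k$ for every node, so the witness size at the root is $O(w^k)$; applying Eq.~\eq{wsizeqq} produces a quantum algorithm using $O(w^k)$ queries, which is $O(1)$ for constant $k$, completing the proof.
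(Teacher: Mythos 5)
Your overall architecture---compose a constant-witness-size span program for $f$ up the tree and prove by induction that the witness size at a node $v$ is at most $w^{\kappa(v)}$---is exactly the paper's, and your two recursion cases correspond to the paper's Eqs.~\eq{wsizeeq} and \eq{betterwsize}. But the step you flag as ``delicate'' and leave unresolved is a genuine gap, and it is the crux of the whole theorem. At a fault node, \defi{directfaulttree} sets $\kappa(v)=1+\max_{i:\,v_i\text{ strong}}\kappa(v_i)$, taking the maximum \emph{only over strong children}. A weak child of a fault node may therefore head a subtree containing arbitrarily many faults without violating the $k$-fault promise (indeed a chain of faults linked through weak children is legal), so your induction hypothesis gives no control over that child's witness size. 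The generic span-program composition bound, however, multiplies the outer witness size by the maximum over \emph{all} children, so the uncontrolled weak subtree would destroy the estimate; your claim that a fault node contributes ``exactly one factor of $w$'' tied to the $0$-valued child is precisely what needs proof.

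The missing ingredients are: (i) the \emph{weighted, input-dependent} composition theorem $\textsc{wsize}_s(P,x)\le\textsc{wsize}_r(F,\tilde x)$ with $r_i=\textsc{wsize}_{s^i}(G,x^i)$ (\thm{wsizethm}, the partial-function analogue of Theorem~4.3 of \cite{Reichardtrefl}), and (ii) the specific property of the span programs for direct functions constructed in \cite{us} that, on a fault input, the weights $s_j$ attached to \emph{weak} inputs do not affect the witness size at all. Only the combination of (i) and (ii) yields Eq.~\eq{betterwsize}, in which the maximum runs over strong children alone, and only then does the induction close: a fault node contributes a factor $w$ times the maximum over its strong subtrees, each of which is a $(k-1)$-fault tree. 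A second, smaller gap: at a trivial node you need $\textsc{wsize}_1(P,\tilde x)=1$ \emph{exactly}, not merely $O(1)$ or ``essentially unchanged''---any constant factor strictly greater than $1$ at trivial nodes would compound over the depth $n$ and give a bound exponential in $n$ rather than in $k$. Your proposal records only that both witness sizes are bounded by $w$, which is not enough to prevent this blow-up.
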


While \thm{maintheorem} promises the existence of $O(1)$-query quantum algorithms for \textsc{Constant-Fault Direct Trees}, \thm{kfault} gives an explicit $O(1)$-query quantum algorithm for these problems. The proof combines properties of the witness size of direct Boolean functions with a more current version of span program algorithms. 

First we define a \textsc{$k$-Fault Direct Tree}, which is a Boolean evaluation
tree made up of a direct Boolean function composed many times, with a promise on the input. The definition
of direct Boolean functions are a bit technical and are given in \cite{us}; here we
will just use their properties. 
Given a direct Boolean function $f$ with $n$ inputs, a \textsc{Direct Tree} for $f$ is
a depth-$d$, $n$-partite complete graph, where each node is given a Boolean value.
The value of 
the node $v$, $val(v)$ is given by $val(v)=f(val(v_1),\dots,val(v_n))$ where $v_i$ is the $i^{th}$ child node of $v$. The values
of the leaves are given via an oracle, and the goal is to find the value of the root of the tree. A \textsc{$k$-Fault Direct Tree}
is a \textsc{Direct Tree} with inputs that satisfy certain conditions; the definition
is similar to \defi{faulttree} for \textsc{$1$-Fault Nand Trees}:
\begin{definition}
\label{defi:directfaulttree}
Let $T$ be a \textsc{Direct Tree} for $f$. Let each node be labeled as 
{\it{fault}} or {\it{trivial}} based on the values of its children and the specific function $f$ used. For
each node, depending on the values of its children, a set of its child nodes are labeled {\it{strong}} in relation to the node, and the remaining
child nodes are labeled {\it{weak}}. For trivial nodes, all children are strong. Then to each node $v$ we assign an
integer $\kappa(v)$ such that:
\begin{itemize}
\item $\kappa(v)=0$ for leaf nodes.
\item Otherwise $v$ has children $\{v_1,\dots,v_n\}$
\subitem If $v$ is trivial, then $\kappa(v)=\max_{i\in\{1,\dots,n\}} \kappa(v_i)=\max_{i:v_i \text{ is strong}}\kappa(v_i)$.
\subitem If $v$ is a fault, then $\kappa(v)=1+\max_{i:v_i \text{ is strong}}\kappa(v_i)$.
\end{itemize}
A tree satisfies the $k$-fault condition if $\kappa(v_r)\le k$ where $v_r$ is the root. \end{definition}

Notice that the restriction $\kappa(v_r)\le k$ 
is slightly relaxed compared to \defi{faulttree}, where it was required that $\kappa(v)\leq 1$ for all nodes $v$ in the whole tree. Thus the span 
program algorithm we describe below applies to an even broader range of trees than were included in 
the discussion in \sec{example}.

For $f=\textsc{nand}$, a node whose two children have the same value (either both $0$ or both $1$), is trivial, and a node with one $0$-valued child and one $1$-valued child is a fault. Furthermore, for $f=\textsc{nand}$, for fault nodes, the $0$-valued child is strong, and the $1$-valued child is weak. One can verify that with these designations, \defi{faulttree} corresponds to \defi{directfaulttree} for $f=\textsc{nand}$. 

For a function $f:S\rightarrow \{0,1\}$, with $S\subseteq\{0,1\}^n$, input $x\in S$, and span program $P$, the weighted witness
size on input $x$ is $\textsc{wsize}_s(P,x)$ where $s\in (\mathbb{R}^+)^n$ is the weighting vector. When $s=(1,\dots,1)$, we write $\textsc{wsize}_1(P,x)$. 
Following \cite{Reichardtrefl}, we rewrite Eq. \eq{wsizeqq} as 
\begin{equation}
Q(f)=O\left(\max_{x\in S}\text{ } \textsc{wsize}_1(P,x)\right) 
\label{eq:wsizeqf}
\end{equation}

In \cite{us}, Zhan et al. show that for any direct Boolean function $f$, one can create a span program $P$ with the following properties\footnote{We note 
that Zhan et al. use a different version of span programs than those used to prove Eq \eq{wsizeqf}. However Reichardt shows in \cite{Reichardtrefl} 
how to transform from one span program formulation to another, and proves that there is a transformation from the span program formulation used by Zhan et al.
to the one needed for \lem{wsizethm} that does not increase the witness size and that uses the weighting vector in the same way.}:
\begin{itemize}
\item $\textsc{wsize}_{1}(P,x)=1$ if input $x$ makes the function trivial.
\item $\textsc{wsize}_{1}(P,x)\leq w$, if input $x$ makes the function a fault, where $w$ is a constant depending only on $f$. 
\item For $\textsc{wsize}_s(P,x)$, $s_j$ do not affect the witness size, where the $j^{th}$ input bit is weak.
\end{itemize}

To create an algorithm, we will combine these facts with Eq. \eq{wsizeqf} and the following composition lemma:
\begin{lemma}(based on Theorem 4.3 in \cite{Reichardtrefl})
Let $f:S\rightarrow\{0,1\}$, $S\subseteq \{0,1\}^n$ and 
$g:C\rightarrow \{0,1\}$,
$C\subseteq \{0,1\}^m$, and consider the composed function $(f\circ g)(x)$ with
 $x=(x^1,\dots, x^n)$, $x^i\in C $ and $g(x^i)\in S$ $\forall i$. 
Let $\tilde{x}=(g(x^1),\dots, g(x^n))\in S$. Let $G$ be a
span program for $g$, $F$ be a span program for $f$, and $s\in(\mathbb{R}^+)^{n\times m}$. Then there
exists a span program $P$ for $f\circ g$ such that
\begin{equation}
\textsc{wsize}_s(P,x)\leq \textsc{wsize}_r(F,\tilde{x})\leq \textsc{wsize}_{1}(F,\tilde{x})\max_{i\in[n]}\textsc{wsize}_{s^i}(G,x^i)
\label{eq:wsizeeq}
\end{equation}
where $r=(\textsc{wsize}_{s^1}(G,x^1),\dots,\textsc{wsize}_{s^n}(G,x^n))$ and $s^i$ is a vector of the $i^{th}$ set of $m$ elements of $s$.
\label{lem:wsizethm}
\end{lemma}
\noindent The main
difference between this lemma and that in \cite{Reichardtrefl} is that the witness size here is input dependent, as is needed for partial functions. 
We also use a single inner function $g$ instead of $n$ different functions $g_i$, but we allow each inner function to have a different input. The proof of this result 
follows exactly using the proof of Theorem 4.3 in \cite{Reichardtrefl}, so we will not repeat it here.

Using the properties of strong and weak nodes in direct Boolean functions, we
see that $ \textsc{wsize}_r(F,\tilde{x})$ in \eq{wsizeeq} doesn't depend on $r_i=\textsc{wsize}_{s^i}(G,x^i)$ for weak inputs $i$. Thus,
we can rewrite Eq. \eq{wsizeeq} as
\begin{equation}
\textsc{wsize}_s(P,x)\leq\textsc{wsize}_{1}(F,\tilde{x})\max_{\substack{i\in[n]\\ \text{s.t. } i \rm{\text{ is strong}}}}\textsc{wsize}_{s^i}(G,x^i).
\label{eq:betterwsize}
\end{equation}

We now prove \thm{kfault}:
\begin{proof}
For a direct function $f$, we know there is a span program $P$ such that $\textsc{wsize}_{1}(P,x)\leq w$ for all fault inputs and $\textsc{wsize}_{1}(P,x)=1$ for trivial inputs. We will show that this implies the existence of a span program for the \textsc{k-Fault Direct Tree}
with  witness size $\leq w^k$. 

We will use an inductive proof on $k$, the number of faults. 
For the base case, consider a depth $1$ tree. This is just a single direct Boolean function. If its input makes the function a fault, using the properties of direct Boolean functions, there is
a span program for this input with witness size at most $w$. If the depth $1$ tree has has an input that makes the function trivial, there is
a span program for this input with witness size at most $1$.  Thus there
exists quantum algorithm with query complexity $O(1)$ that evaluates this tree.

Consider a depth $d$, $k$-fault tree $T$ with input $x$. We can think of this instead as a single direct function $f$ (with input $\tilde{x}$ and
span program $P$), composed with $n$ subtrees of depth $d-1$,
where we label the $i^{th}$ subtree $T^i$. Let $P_{T^i}$
be a span program for $T^i$, and we call the input to that subtree $x^i$. If $\tilde{x}$ makes $f$ a fault, then by 
 Eq \eq{betterwsize} we know there exists a span program $P_T$ for $T$ such that:
\begin{align}
\textsc{wsize}_1(P_T,x)&\leq\textsc{wsize}_1(P,\tilde{x})\times\max_{\substack{i\in[n]\\i: i\rm{\text{ is strong}}}}\textsc{wsize}_1(P_{T^i},x^i) \nonumber\\
&\leq w\times\max_{\substack{i\in[n]\\i: i\rm{\text{ is strong}}}}\textsc{wsize}_1(P_{T^i},x^i).
\end{align}
Now if we take the subtree $T^{i^*}$ that maximizes the $2^{nd}$ line, then by the definition of $k$-fault trees,
$T^{i^*}$ is a $(k-1)$-fault tree.
By inductive assumption, there is a span program for $T^{i^*}$ satisfying $\textsc{wsize}_1(P_{T^{i^*}}, x^i)\leq w^{k-1}$, so $T$ satisfies $\textsc{wsize}_1(P_T, x)\leq w^k$, and there is a quantum algorithm
for the tree that uses $O(w^k)$ queries.

Given the same setup, but now assuming the input $\tilde{x}$ makes $f$ trivial, then by Eq \eq{betterwsize} we have:
\begin{align}
\textsc{wsize}_1(P_T,x)&\leq\textsc{wsize}_1(P,\tilde{x})\times\max_{\substack{i\in[n]\\i: i\rm{\text{ is strong}}}}\textsc{wsize}_1(P^i,x^i) \nonumber\\
&=1\times\max_{\substack{i\in[n]\\i: i\rm{\text{ is strong}}}}\textsc{wsize}_1(P^i,x^i) .
\end{align}
Now if we take the subtree $T^{i^*}$ that maximizes the $2^{nd}$ line, then by the definition of fault trees,
$T^{i^*}$ is a $\kappa$ fault tree with $\kappa\leq k$. But we know if $\kappa\leq k-1$, then
$\textsc{wsize}_1(P_{T^{i^*}},x^i)\leq w^{k-1}$ by inductive assumption, so we're done in that case. So instead we assume $\kappa=k$. Thus
we have reduced the problem to a smaller depth tree, and we can repeat the above procedure until we find the first subtree with a 
fault at its root (in which case we are back to the previous case) or show that there are no further faults in the tree (in which case the tree can be evaluated in $O(1)$ queries). Since the tree has finite depth, this procedure will terminate.
\end{proof}

%%%%%%%%%%%%%%%%%%%%%%%%%%%%%%%%%%%%%%%%%%%%%%%%%%%%%%%%%
\section{Conclusions}
%%%%%%%%%%%%%%%%%%%%%%%%%%%%%%%%%%%%%%%%%%%%%%%%%%%%%%%%
We describe a method for upper bounding the quantum query complexity of 
Boolean functions
using the general adversary bound. Using this method, we show that \textsc{Constant-Fault Direct Trees} can always be evaluated using $O(1)$ queries. Furthermore, we create an algorithm with a matching 
upper bound using span programs.

 We would like to find other examples where \thm{maintheorem} is useful, although we suspect that \textsc{Constant-Fault Direct Trees} are a somewhat unique case. It is clear from the span program algorithm described in \sec{spanpsec} that \thm{maintheorem} will not be useful for composed functions where the base function is created using this type of span program. However, there could be other types of quantum walk algorithms, for example, to which \thm{maintheorem} might be applied. In any case, this work suggests that new ways of upper bounding the general adversary bound could give us a second window into quantum query complexity
beyond algorithms.

Beside the practical application of \thm{maintheorem}, the result tells us something abstract and general about the structure of quantum algorithms. There is a natural way that quantum algorithms should compose, and if an algorithm does not compose in this natural way, then one knows that something is non-optimal.

%%%%%%%%%%%%%%%%%%%%%%%%%%%%%%%%%%%%%%%%%%%%%%%%%%%

\section{Acknowledgements} 
Many thanks to Rajat Mittal for generously explaining the details of the composition theorem for the general adversary bound. Thanks to the anonymous FOCS reviewer for pointing out problems with a previous version, and also for encouraging me to find a constant query span program algorithm. Thanks to Bohua Zhan, Avinatan Hassidim, Eddie Farhi, Andy Lutomirski, Paul Hess, and Scott Aaronson for helpful discussions. This work was supported by NSF Grant No. DGE-0801525,
{\em IGERT: Interdisciplinary Quantum Information Science and Engineering} and by the U.S. Department of Energy under cooperative research agreement Contract Number DE-FG02-05ER41360.

\appendix
%%%%%%%%%%%%%%%%%%%%%%%%%%%%%%%%%%%%%%%%%%%%%%
\section{Composition Proof} \label{app:comp}
%%%%%%%%%%%%%%%%%%%%%%%%%%%%%%%%%%%%%%%%%%%%%%
In this section, we will prove \lem{lemone}:
\lemonee*

\noindent This proof follows Appendix C from  Lee et al. \cite{Randothers2} very closely, including most notation. The difference
between this Lemma and that in \cite{Randothers2} is that $f$ is allowed to be partial. We write out most of the proof again because it is subtle where
 the partiality of $f$ enters the proof, and to allow this appendix to be read without constant reference to \cite{Randothers2}.

First, we use an expression for the general adversary bound derived from the dual program of the general adversary bound:
\begin{align}\label{eq:dual1}
{\rm{ADV}}^{\pm}(g)=&\max_{W,\text{ }\Omega\circ I =\Omega}W\bullet J \nonumber \\
&\text{subject to }W\circ G=0 \nonumber \\
&\hspace{1.8cm}\Omega\pm W \circ \Delta_i\succeq 0 \nonumber \\
&\hspace{1.8cm} \text{Tr}(\Omega)=1
\end{align}
\noindent where $g:C\rightarrow \{0,1\}$, with $C\subseteq\{0,1\}^m$ and all matrices are indexed by $x,y\in C$, so e.g. $[W]_{xy}$ is the element of $W$ in the row corresponding to input $x$ and column corresponding to input $y$. $W$ can always be chosen to be symmetric. $G$ satisfies $[G]_{xy}=\delta_{g(x),g(y)}$, and $\Delta_i$ satisfies $[\Delta_i]_{xy}=1-\delta_{x_i,y_i}$, with $x_i$ the value of the
$i^{th}$ bit of the input $x$. We call $\Delta_i$ the filtering matrix. $\succeq 0$ means positive semidefinite, $J$ is the all $1$'s matrix, and $W\bullet J \nonumber$ means take the sum of all elements of $W$. When $\circ$ is used between uppercase or
Greek letters, it denotes Hadamard product, while between lowercase letters, it denotes composition.

We want to determine the adversary bound for a composed function $f\circ g$ consisting of the functions $g:C\rightarrow\{0,1\}$ with 
$C\subseteq\{0,1\}^m$ and $f:S\rightarrow\{0,1\}$ with $S\subseteq\{0,1\}^n$. We consider the input to $f\circ g$ to be a vector of inputs $x=(x^1,\dots, x^n)$ with $x^i\in C$. Given
an input $x$ to the composed function, we denote the input to the $f$ part of the function as $\tilde{x}$: $\tilde{x}=(g(x^1),\dots,g(x^n))$. Let $(W, \Omega)$ be an optimal solution for $g$ with ${\rm{ADV}}^\pm(g)=d_g$ and $(V,\Lambda)$ be an optimal
solution for $f$ with ${\rm{ADV}}^\pm(f)=d_f$. To clarify the filtering matrices, we say $\Delta_q^g$ is indexed by inputs to $g$, $\Delta_p^f$ is indexed by inputs to $f$, and $\Delta_{(p,q)}^{f\circ g}$ is indexed by inputs
to the composed function $f\circ g$. (So $\Delta_{(p,q)}^{f\circ g}$ refers to the $(pm+q)^{\text{th}}$ bit of the input string.)

We assume that the initial input $x=(x^1,\dots,x^n)$ is valid for the $g$ part of the composition, i.e. $x^i\in C$ $\forall i$. A problem might
arise if $\tilde{x}$, the input to $f$, is not an element of $S$. This is an issue that Lee et al. do not have to deal with, but which might affect the proof. Here we show that the proof goes through with small modifications.

The main new element we introduce is a set of primed matrices, which extend the matrices indexed by inputs to $f$ to 
be indexed by all elements of $\{0,1\}^n$, not just those in $S$. For a primed matrix $A'$, indexed by $x, y\in\{0,1\}^n$, if $x\notin S$ or $y\notin S$, then $[A']_{xy}=0$.  
We use similar notation for matrices indexed by $x=(x^1,\dots,x^n)$ where $\tilde{x}\in S$; we create primed matrices by extending the indeces to all inputs $x$ 
by making those elements with $\tilde{x}\notin S$ have value $0$. Notice
if the extended matrices $(W',\Omega')$ are a solution to the dual program, then the reduced matrices $(W,\Omega)$
are also a solution.
For matrices $A'$  indexed by $\{0,1\}^n$, we define a new matrix $\tilde{A}'$ indexed by $C^n$, as $[\tilde{A}']_{xy}=[A']_{\tilde{x}\tilde{y}}$, where $\tilde{x}$ is the output of 
the $g$ functions on the input $x$, and likewise for $\tilde{y}$ and $y$. $\tilde{A}'$ expands each element of $A'$ into a block of elements.

Before we get to the main lemma, we will need a few other results:

\begin{lemma}\cite{Randothers2}
Let $M'$ be a matrix labeled by $x\in\{0,1\}^n$, and $\tilde{M}'$ be defined as above. Then if $M'\succeq 0$,
$\tilde{M}'\succeq 0$.
\label{lem:positive}
\end{lemma}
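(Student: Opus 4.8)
The plan is to recognize that passing from $M'$ to $\tilde M'$ is nothing more than duplicating the rows and columns of $M'$ according to the many-to-one map $\pi\colon C^n\to\{0,1\}^n$ given by $\pi(x)=\tilde x=(g(x^1),\dots,g(x^n))$, and that such a row/column duplication is a congruence transformation, which preserves positive semidefiniteness. Concretely, I would introduce the $0$--$1$ matrix $S$ with rows indexed by $\{0,1\}^n$ and columns indexed by $C^n$, defined by $[S]_{w,x}=\delta_{w,\pi(x)}$, and check directly by expanding the matrix product that $\tilde M'=S^{\dagger}M'S$. The positive semidefiniteness of $\tilde M'$ then follows immediately: for any vector $v$ indexed by $C^n$, setting $u=Sv$ gives $v^{\dagger}\tilde M' v=v^{\dagger}S^{\dagger}M'Sv=u^{\dagger}M'u\geq 0$, where the inequality uses $M'\succeq 0$.

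If I wanted to avoid introducing $S$ explicitly, I would argue with the quadratic form directly. Fix an arbitrary vector $v$ indexed by $C^n$ and, for each $w\in\{0,1\}^n$, collapse the block of $v$-entries lying over $w$ into a single number $u_w=\sum_{x:\,\pi(x)=w}v_x$. Using the defining relation $[\tilde M']_{xy}=[M']_{\pi(x)\pi(y)}$ and regrouping the double sum over $C^n\times C^n$ according to the values of $\pi(x)$ and $\pi(y)$, one obtains
\begin{equation*}
v^{\dagger}\tilde M' v=\sum_{w,w'\in\{0,1\}^n}[M']_{ww'}\,\overline{u_w}\,u_{w'}=u^{\dagger}M'u\geq 0,
\end{equation*}
so $\tilde M'\succeq 0$. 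This is really the same computation as the congruence argument, just written out.

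I do not expect a genuine obstacle here, since this is a standard linear-algebra fact about ``blow-up'' matrices. The only points requiring care are purely bookkeeping: being explicit that the relevant index map is $\pi(x)=\tilde x$ (rather than the identity on $C^n$), and verifying the identity $\tilde M'=S^{\dagger}M'S$, or equivalently the legitimacy of the index regrouping, rather than merely asserting it. Note also that the primed/zero-padding structure of $M'$ plays no role in this lemma---the argument holds for any matrix indexed by $\{0,1\}^n$---so I would not invoke anything about $D$ or the partiality of $f$ at this step.
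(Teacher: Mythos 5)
Your proof is correct, but it takes a different (and cleaner) route than the paper's. The paper argues via an explicit basis adapted to $\tilde M'$: it first exhibits, for each block of duplicated indices, the within-block vectors summing to zero as genuine $0$-eigenvectors of $\tilde M'$, then lifts each eigenvector $\vec v^i$ of $M'$ to a block-constant vector $\vec{\tilde v}^i$, checks that $\vec{\tilde v}^{iT}\tilde M'\vec{\tilde v}^j=\delta_{ij}\lambda_i$ with $\lambda_i\ge 0$, and concludes that the quadratic form is nonnegative on the full space because these vectors together span it and the cross terms vanish. Your observation that $\tilde M'=S^{\dagger}M'S$ for the $0$--$1$ duplication matrix $S$ (with $[S]_{w,x}=\delta_{w,\pi(x)}$ and $\pi(x)=\tilde x$) subsumes all of that in one line: positive semidefiniteness is preserved under congruence, full stop. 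What your approach buys is economy and robustness---you never need to verify that the lifted vectors complete a basis, nor that the cross terms between the two families vanish, both of which the paper asserts somewhat tersely; your second, $S$-free version is exactly the regrouped quadratic-form computation that the congruence identity encodes. What the paper's approach buys, if anything, is a concrete description of the spectrum of $\tilde M'$ (the duplicated eigenvalues and the extra kernel), which is more information than the lemma requires. Your closing remarks are also on point: the correct index map is $\pi(x)=\tilde x$ rather than the identity, and the zero-padding of primed matrices is irrelevant to this lemma.
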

\begin{proof}
This claim is stated without proof in \cite{Randothers2}. $\tilde{M'}$ is created by turning all of the elements of $M'$ into block matrices with repeated inputs. When an index
$x\in \{0,1\}^n$ is expanded to a block of $k$ elements, there are $k-1$ eigenstates of $\tilde{M}'$ that
only have nonzero elements on this block and that have eigenvalue $0$. By considering all $2^n$ blocks (each element of $\{0,1\}^n$ becomes a block) we obtain $2^n(k-1)$ $0$-valued eigenvectors. Next we use the eigenvectors $\vec{v}^i$ of $M'$ to create new vectors $\vec{\tilde{v}}^i$ in the space of $\tilde{M}'$.
We give every element in the $x^{th}$ block of $\vec{\tilde{v}}^i$ the value $\vec{v}^i(x)/k$, where $\vec{v}^i(x)$ is the $x^{th}$ element of $\vec{v}^i$.
The vectors
$\vec{\tilde{v}}^i$ complete the basis with the $0$-valued eigenvectors, and are orthogonal to the $0$-valued vectors, but not 
to each other.  However, the $\vec{\tilde{v}}^i$ have the property that
$\vec{\tilde{v}}^{iT}\tilde{M}'\vec{\tilde{v}}^j=\delta_{ij}\lambda_i$ where $\lambda_i$ is the eigenvalue of $\vec{v}^i$, so $\lambda_i\geq 0$.
Thus using these vectors as a basis, we have that $\vec{u}^T\tilde{M}'\vec{u}\geq 0$ for all vectors $\vec{u}$.
\end{proof}

The following is identical to Claim C.1 from \cite{Randothers2} and follows because there is no restriction that
 $g$ be a total function. Thus we state it without proof:
\begin{lemma}
\label{lem:dualbool}
For a function $g$, there is a solution to the dual program, $(W,\Omega)$, such 
that ${\rm{ADV}}^\pm(g)=d_g$, $d_g\Omega\pm W\succeq 0$, and
 $\sum_{x:g(x)=1}\Omega(x,x)=\sum_{x:g(x)=0}\Omega(x,x)=1/2$.
\end{lemma}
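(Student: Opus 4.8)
The plan is to start from an arbitrary optimal solution $(W,\Omega)$ of the dual program \eq{dual1} with value $d_g$, and reshape it into one meeting both requirements, exploiting the two-block structure that the boolean output of $g$ forces on $W$ and $\Omega$. Since $[G]_{xy}=\delta_{g(x)g(y)}$ and $W\circ G=0$, the matrix $W$ vanishes on every pair $x,y$ with $g(x)=g(y)$; ordering the inputs of $C$ so that the $g=0$ inputs precede the $g=1$ inputs, this means $W=\left(\begin{smallmatrix}0&B\\ B^{T}&0\end{smallmatrix}\right)$, and because $\Omega$ is diagonal, $\Omega=\left(\begin{smallmatrix}\Omega_{0}&0\\ 0&\Omega_{1}\end{smallmatrix}\right)$ with $\Omega_{0},\Omega_{1}\succeq0$ (adding the two constraints $\Omega\pm W\circ\Delta_i\succeq0$ gives $2\Omega\succeq0$). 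Write $a=\mathrm{Tr}\,\Omega_{0}$ and $b=\mathrm{Tr}\,\Omega_{1}$, so $a+b=1$.

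For the trace condition I would use a diagonal rescaling that leaves the filtered constraints untouched. Conjugating by $T_{s}=\mathrm{diag}(\sqrt{s}\,I,\ s^{-1/2}I)$ sends $\Omega$ to $\mathrm{diag}(s\Omega_{0},s^{-1}\Omega_{1})$ but fixes $W\circ\Delta_i$, since its only nonzero block is off-diagonal and is scaled by $\sqrt{s}\cdot s^{-1/2}=1$; as conjugation by an invertible matrix preserves positive semidefiniteness, the constraints $\Omega\pm W\circ\Delta_i\succeq0$ survive. Choosing $s=\sqrt{b/a}$ equalizes the block traces at $\sqrt{ab}$ each, and rescaling the whole solution by $1/(2\sqrt{ab})$ restores $\mathrm{Tr}\,\Omega=1$ while multiplying the objective by $1/(2\sqrt{ab})$. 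By AM--GM, $2\sqrt{ab}\le a+b=1$, so the new objective is at least $d_{g}$; optimality forces equality, hence $2\sqrt{ab}=1$ and $a=b=\tfrac12$. (If $a$ or $b$ is zero, the zero diagonal block forces $B=0$, so $W=0$, $d_{g}=0$, and the statement is trivial.) This gives the second property.

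The first property, $d_{g}\Omega\pm W\succeq0$, is the real work. Assuming $\Omega_{0},\Omega_{1}\succ0$ (otherwise restrict to their supports), a Schur-complement computation shows that $d_{g}\Omega\pm W\succeq0$ is equivalent to the single spectral bound $\|M\|\le d_{g}$, where $M=\Omega_{0}^{-1/2}B\,\Omega_{1}^{-1/2}$. What makes this delicate is that the \emph{reverse} inequality is automatic: writing $u=\Omega_{0}^{1/2}\mathbf 1$ and $v=\Omega_{1}^{1/2}\mathbf 1$, the objective is $d_{g}=W\bullet J=2\,\mathbf 1^{T}B\mathbf 1=2\,u^{T}Mv$, and since $\|u\|^{2}=\|v\|^{2}=\tfrac12$ by the trace condition, Cauchy--Schwarz gives $d_{g}\le\|M\|$. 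Thus $d_{g}\Omega\pm W\succeq0$ can hold only when $\|M\|=d_{g}$ \emph{exactly}, i.e.\ precisely when $u$ and $v$ are leading singular vectors of $M$. I expect establishing this extremal alignment to be the main obstacle: it fails at a generic feasible point and must be extracted from optimality. I would obtain it from strong duality of the semidefinite program \eq{dual1}, using complementary slackness with an optimal primal adversary matrix $\Gamma$ (whose top eigenvector fixes $u$ and $v$) to build an optimal $(W,\Omega)$ with $\|M\|=\|\Gamma\|=d_{g}$; this is exactly the content of Claim C.1 of Lee et al.\ \cite{Randothers2}.

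Finally, I would verify that partiality of $g$ never enters. Every matrix above is indexed only by the actual domain $C\subseteq\{0,1\}^{m}$, and totality of $g$ (having all $2^{m}$ strings present) is used nowhere: the block decomposition needs only that $g$ takes two values, the rescaling and Schur-complement steps are purely linear-algebraic over $C$, and the duality step refers to the same program \eq{dual1}, which is already defined for partial $g$. Hence the argument applies verbatim when $g$ is a partial boolean function, which is the only change relative to Claim C.1 of Lee et al.
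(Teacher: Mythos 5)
Your proposal is correct in substance but takes a genuinely different---and far more explicit---route than the paper, which in fact offers no proof at all: it simply asserts that the lemma ``is identical to Claim C.1 from \cite{Randothers2}'' and notes that the argument there already permits $g$ to be partial. Your conjugation-and-rescaling derivation of the balanced-trace condition is a correct, self-contained argument (including the degenerate case $a=0$ or $b=0$, where positive semidefiniteness of $\Omega\pm W\circ\Delta_i$ with a vanishing diagonal block indeed kills $B$), and your reduction of $d_g\Omega\pm W\succeq 0$ to the spectral bound $\|M\|\le d_g$, paired with the reverse inequality $d_g\le\|M\|$ from Cauchy--Schwarz, is an accurate diagnosis of where the content of the lemma actually lives: the property fails at a generic optimal point of Eq.~\eq{dual1} and cannot be massaged in after the fact. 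Where your write-up stops is exactly where the paper stops---the extremal alignment is still delegated to Claim C.1. For completeness: the standard proof of that claim obtains both properties at once and renders your rescaling step unnecessary. Take an optimal adversary matrix $\Gamma$ (supported on pairs with $g(x)\neq g(y)$, with $\|\Gamma\|=d_g$ and $\|\Gamma\circ\Delta_i\|\le 1$) and a unit principal eigenvector $\delta$, and set $[W]_{xy}=[\Gamma]_{xy}\delta_x\delta_y$ and $[\Omega]_{xx}=\delta_x^2$. Then $W\bullet J=\delta^T\Gamma\delta=d_g$, while $\Omega\pm W\circ\Delta_i\succeq 0$ and $d_g\Omega\pm W\succeq 0$ are just $I\pm\Gamma\circ\Delta_i\succeq 0$ and $d_g I\pm\Gamma\succeq 0$ conjugated by $\mathrm{diag}(\delta)$; the trace split is automatic because $\Gamma$ is supported on the two off-diagonal blocks, so $\Gamma_{01}\delta_1=d_g\delta_0$ and $\Gamma_{01}^T\delta_0=d_g\delta_1$ force $\|\delta_0\|^2=\|\delta_1\|^2=1/2$. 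Your closing observation---that nothing anywhere uses totality of $g$, since every matrix is indexed only by $C$---is precisely the point the paper relies on to justify importing the claim.
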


In \lem{bigidea}, we will show that ${\rm{ADV}}^\pm(f\circ g)={\rm{ADV}}^\pm(f){\rm{ADV}}^\pm(g)$, which implies
\lem{lemone}.

\begin{lemma}\label{lem:bigidea}
A solution to the dual program for $f\circ g$ is $(U,\Upsilon)$, where $(U',\Upsilon')=(c\times\tilde{V}'\circ(d_g\Omega+W)^{\otimes n},c\times d^{n-1}_g\tilde{\Lambda}'\circ\Omega^{\otimes n})$ and $c=2^nd_g^{-(n-1)}$.  $(U,\Upsilon)$ give the adversary bound
${\rm{ADV}}^{\pm}(f\circ g)=d_gd_f$.
\end{lemma}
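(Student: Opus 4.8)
The plan is to verify that the constructed pair $(U,\Upsilon)$ (the restriction of $(U',\Upsilon')$ to those composed inputs $x\in C^n$ with $\tilde{x}\in D$) is a feasible point of the dual program \eq{dual1} written for $f\circ g$, and then to compute its objective value $U\bullet J$ and show it equals $d_gd_f$. Since \eq{dual1} is a maximization, feasibility together with objective value $d_gd_f$ yields ${\rm ADV}^\pm(f\circ g)\geq d_gd_f$, which is the inequality feeding \lem{lemone}; the matching reverse inequality is handled as in Lee et al. Feasibility amounts to four conditions: that $\Upsilon$ is diagonal, that $U\circ H=0$ where $[H]_{xy}=\delta_{(f\circ g)(x),(f\circ g)(y)}$, that ${\rm Tr}(\Upsilon)=1$, and the semidefinite constraints $\Upsilon\pm U\circ\Delta_{(p,q)}\succeq0$ for every block $p$ and inner bit $q$.

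Three of these are short. Diagonality of $\Upsilon$ is immediate: $\Omega$ is diagonal, so $\Omega^{\otimes n}$ is diagonal and Hadamard-multiplying $\tilde{\Lambda}'$ by it kills everything off-diagonal. The condition $U\circ H=0$ follows because $(f\circ g)(x)=f(\tilde{x})$, so $H$ agrees with the $f$-level matrix $F$ pulled back through the tilde operation, and $[\tilde{V}']_{xy}=[V]_{\tilde{x}\tilde{y}}$ vanishes whenever $f(\tilde{x})=f(\tilde{y})$ by the $f$-constraint $V\circ F=0$. For the trace and the objective I would group the sums over composed inputs $x$ by their images $\tilde{x}=a\in D$: for each block the inner sum over preimages factorizes, and using that $\Omega$ is diagonal with $\sum_{g(x)=1}\Omega_{xx}=\sum_{g(x)=0}\Omega_{xx}=\tfrac12$ (\lem{dualbool}) together with $W\circ G=0$ and $W\bullet J=d_g$, each block contributes exactly $\tfrac{d_g}{2}$ to the objective and $\tfrac12$ to the trace. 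The constant $c=2^nd_g^{-(n-1)}$ is chosen precisely so that ${\rm Tr}(\Upsilon)=c\,d_g^{n-1}2^{-n}{\rm Tr}(\Lambda)=1$ and $U\bullet J=c\,(d_g/2)^n(V\bullet J)=d_gd_f$.

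The real work, and the step I expect to be the main obstacle, is the family of semidefinite constraints $\Upsilon\pm U\circ\Delta_{(p,q)}\succeq0$. My plan is to exhibit each as a single Schur (Hadamard) product of two positive semidefinite matrices. The starting observations are that $\Delta_{(p,q)}$ only touches block $p$, so $(d_g\Omega+W)\circ\Delta_q=W\circ\Delta_q$ since $\Omega$ is diagonal, and that on the support of $\tilde{\Lambda}'$ one has $\tilde{x}=\tilde{y}$, where every off-block factor $W$ vanishes because $W\circ G=0$. These let me write both $\Upsilon'/c$ and $(U'\circ\Delta_{(p,q)})/c$ with the common off-block tensor $\bigotimes_{r\neq p}(d_g\Omega+W)^{(r)}$, and with block-$p$ factors $\Omega$ and $\pm W\circ\Delta_q$ that have disjoint supports ($g(x^p)=g(y^p)$ versus $g(x^p)\neq g(y^p)$). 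Combining them gives the identity $\Upsilon'\pm U'\circ\Delta_{(p,q)}=c\,\widetilde{(\Lambda\pm V\circ\Delta_p)'}\circ\big((\Omega\pm W\circ\Delta_q)^{(p)}\otimes\bigotimes_{r\neq p}(d_g\Omega+W)^{(r)}\big)$. The first factor is positive semidefinite by the $f$-constraint $\Lambda\pm V\circ\Delta_p\succeq0$ followed by \lem{positive}; the second is a tensor product of the $g$-constraint $\Omega\pm W\circ\Delta_q\succeq0$ with copies of $d_g\Omega\pm W\succeq0$ (\lem{dualbool}), hence positive semidefinite; the Schur product theorem then finishes it.

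Finally, the point where the partiality of $f$ must be handled with care is exactly this last step. When $D\subsetneq\{0,1\}^n$ there are composed inputs with $\tilde{x}\notin D$, and the priming convention pads the corresponding rows and columns with zeros; I would lean on \lem{positive} to guarantee that the zero-padded, tilde-expanded matrix $\widetilde{(\Lambda\pm V\circ\Delta_p)'}$ is still positive semidefinite, and check that the support bookkeeping above (which decides which entries of $W$, $\Omega$, and $\Delta$ vanish) remains valid once those extra zero entries are present. This is the only place the argument of Lee et al. genuinely needs modification, so I would state the support and zero-padding claims explicitly before invoking the Schur product theorem.
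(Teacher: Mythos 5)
Your proposal takes essentially the same route as the paper's proof: the objective and trace are computed by grouping composed inputs by their image $\tilde{x}$ and factorizing each block into a sum worth $d_g/2$ (resp.\ $1/2$) via \lem{dualbool} and $W\circ G=0$, and the semidefinite constraints are obtained as a Hadamard/tensor product of the $f$-level constraint (tilde-expanded and zero-padded, made PSD by \lem{positive}) with the $g$-level constraints, followed by exactly the support bookkeeping you describe and the Schur product theorem. The one slip is in your displayed identity: because $\Omega^{(p)}$ vanishes on the support of $\tilde{V}'\circ\tilde{\Delta}'_p$, the block-$p$ factor contributes the \emph{product} of the two $\pm$ signs, so with the signs coupled in the usual convention you recover $\Upsilon'+U'\circ\Delta_{(p,q)}$ from both choices and never the minus case. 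This is harmless to fix --- either take opposite signs in the two factors (all four sign combinations are PSD by \lem{dualbool} and the dual constraints, so the Schur-product argument is unchanged), or do as the paper does and fix the block-$p$ factor at $\Omega+W\circ\Delta_q$, carrying the $\pm$ only on $\tilde{\Lambda}'\pm\tilde{V}'\circ\tilde{\Delta}'_p$.
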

\begin{proof}
The first thing to check is that $U'$ and $\Upsilon'$ are valid primed matrices, or otherwise we can not recover $U$ and $\Upsilon$. Because each of $U'$ and $\Upsilon'$ are formed by Hadamard products with primed matrices, they themselves are also primed matrices. 

We next calculate the objective function, and afterwards check that $(U',\Upsilon')$ satisfy the conditions of the dual program. 

The objective function gives:
\begin{align}
J\bullet(c\tilde{V}'\circ(d_g\Omega+W)^{\otimes n})&=c\sum_{\substack{a,b\in S \\ f(a)\neq f(b)}}[V]_{ab}\sum_{\substack{x,y \\ \tilde{x}=a,\tilde{y}=b}}\prod_i(d_g[\Omega]_{x^iy^i}+[W]_{x^iy^i}) \nonumber \\
&=c \sum_{\substack{a,b\in S \\ f(a)\neq f(b)}}[V]_{ab}\prod_i\sum_{\substack{x^i,y^i \\ g(x^i)=a_i\\ g(y^i)=b_i }}(d_g[\Omega]_{x^iy^i}+[W]_{x^iy^i}) 
\label{eq:obj}
\end{align}
where in the first line we've replaced $V'$ by $V$ because adding extra $0$'s does not affect the sum. In the second line, $a_i$ and $b_i$ are the $i^{th}$ bits of $a$ and $b$ respectively, and we've changed the order of multiplication and addition. This ordering change is not affected by the fact that 
$f$ is partial, since the first summation already fixes an input to $f$. 

We now examine the sum 
\begin{align}
\sum_{\substack{x^i,y^i \\ g(x^i)=a_i\\ g(y^i)=b_i }}(d_g[\Omega]_{x^iy^i}+[W]_{x^iy^i}). 
\end{align}
We consider the cases $a_i=b_i$, and $a_i\neq b_i$ separately. When $a_i=b_i$, because $W\circ G=0$, we know that $[W]_{x^iy^i}=0$,
 so in this case, only $[\Omega]_{x^iy^i}$ is non-zero.
 Since $\Omega$ is diagonal, it only has non-zero values when $x^i=y^i$, and using \lem{dualbool}, the sum is 
$d_g/2$. When $a_i\neq b_i$, then $x^i\neq y^i$, so $[\Omega]_{x^iy^i}=0$. In this case, the sum will include exactly 
half of the elements of $W$: either those elements with $g(x^i)=0$ and $g(y^i)=1$, or with $g(x^i)=1$ and $g(y^i)=0$. Since $W$ is symmetric, this amounts to $\frac{1}{2}W\bullet J=d_g/2$. Multiplying $n$
times for the product over the $i's$ and using the definition of the objective function for $f$ gives the final result:
\begin{align}
J\bullet(c\tilde{V}'\circ(d_g\Omega+W)^{\otimes n})
=c\times d_f\left(\frac{d_g}{2}\right)^n=d_fd_g
\end{align}

Now we show that $U'$ and $\Upsilon'$ satisfy the conditions of the dual program. We require that $[U']_{xy}=0$
for $(f\circ g)(x)=(f\circ g)(y)$.  Notice $U'=0$ whenever $\tilde{V}'=0$,
and $[\tilde{V}']_{xy}=0$ for $(f\circ g)(x)=(f\circ g)(y)$, so this requirement
 holds. Likewise $\Upsilon'$ is a diagonal matrix because it can only be nonzero
where $\Omega^{\otimes n}$ is non-zero, and $\Omega^{\otimes n}$ is
 diagonal.

Next we will show that $\Upsilon'\pm U'\circ(\Delta^{f\circ q}_{(p,q)})'\succeq 0$. From
 \lem{dualbool} and from the conditions on the dual programs for $f$ and $g$,
 we have $d_g\Omega\pm W\succeq 0$, $\Omega\pm
 W\circ\Delta_q^g\succeq 0$, and 
$\Lambda'\pm V'\circ(\Delta_p^f)'\succeq 0$.
 Then by \lem{positive},
 $\tilde{\Lambda}'\pm\tilde{V}'\circ(\tilde{\Delta}_p^f)'\succeq 0$. Since tensor 
and Hadamard products preserve semidefinite positivity, we get
\begin{align} \label{eq:semipos}
0\preceq(\tilde{\Lambda}'\pm\tilde{V}'\circ(\tilde{\Delta}_p^f)')\circ\left((d_g\Omega+W)^{\otimes (p-1)}\otimes(\Omega+W\circ\Delta_q^g)\otimes(d_g\Omega+W)^{\otimes (n-p)}\right),
\end{align}
where these matrices are indexed by all elements of $C^n$. 
$[W]_{x^iy^i}=0$ for $x=y$ while $\tilde{\Lambda}'$ is only nonzero for elements $[\tilde{\Lambda}']_{xy}$ with $x=y$, so any terms involving 
a Hadamard of $W$ and $\tilde{\Lambda}'$ are 0. Similarly,
the $\Omega$ in 
the $p^{th}$ tensor product is only nonzero for $x^p=y^p$, but for these 
inputs, the term $(\tilde{\Delta}_p^f)'$ is always zero, so in fact the
non-zero terms of this $\Omega$
do not contribute. Thus we are free to replace this $\Omega$ with $d_g\Omega\circ\Delta^g_q$.
We obtain
\begin{align}
0\preceq &d_g^{n-1}\tilde{\Lambda}'\circ\Omega^{\otimes n}\pm(\tilde{V}'\circ(\tilde{\Delta}_p^f)')\circ\left((d_g\Omega+W)^{\otimes (p-1)}\otimes(d_g\Omega\circ\Delta^g_q+W\circ\Delta^g_q)\otimes(d_g\Omega+W)^{\otimes (n-p)}\right)\nonumber\\
0\preceq&d_g^{n-1}\tilde{\Lambda}'\circ\Omega^{\otimes n}\pm(\tilde{V}'\circ(\tilde{\Delta}_p^f)')\circ\left((d_g\Omega+W)^{\otimes n}\circ \{J^{\otimes (p-1)}\otimes\Delta^g_q\otimes J^{\otimes (n-p)}\}\right).
\end{align}
Finally, the term $(\tilde{\Delta}_p^f)'$ can be written as
$J-G$ acting on only 
the $p^{\text{th}}$ term in the tensor product 
$(d_g\Omega+W)^{\otimes n}$, so we need to evaluate $(J-G)\circ (d_g\Omega+W)\circ \Delta_q^g$. We have $(J-G)\circ \Omega=\Delta_q^g\circ
\Omega=0$, and $(J-G)\circ W=W$, so we can remove $(\tilde{\Delta}_p^f)'$
without altering the expression.

Now the term $\{J^{\otimes (p-1)}\otimes\Delta_q^g\otimes J^{\otimes (n-p)}\}$ is almost $\Delta^{f\circ g}_{(p,q)}$, except it is like a primed matrix; its indeces are all elements in $C^{n}$, not just valid inputs to $f$, yet it is not primed, in that some of its elements to non-valid inputs to $f$ are non-zero. However it is involved in a Hadamard product with $\tilde{V}'$, a primed matrix, so all of the terms corresponding to non-valid inputs are zeroed, and we can make it be a primed matrix without affecting the expression. We obtain
\begin{align}
0\preceq&d_g^{n-1}\tilde{\Lambda}'\circ\Omega^{\otimes n}\pm(\tilde{V}'\circ\left((d_g\Omega+W)^{\otimes n}\circ (\Delta^{f\circ g}_{(p,q)})'\right),
\end{align} which is precisely the positivity constraint of the dual program.

Finally, we need to check that $\text{Tr}(cd_g^{n-1}\tilde{\Lambda}'\circ\Omega^{\otimes n})=1$:
\begin{align}
\text{Tr}(cd_g^{n-1}\tilde{\Lambda}'\circ\Omega^{\otimes n})&=cd_g^{n-1}\sum_{a\in S}[\Lambda]_{aa}\sum_{x:\tilde{x}=a}\prod_i[\Omega]_{x^ix^i}\nonumber \\
&=cd_g^{n-1}\sum_{a\in S}[\Lambda]_{aa}\prod_i\sum_{x^i:g(x^i)=a_i}[\Omega]_{x^ix^i} \nonumber \\
&=cd_g^{n-1}\left(\frac{1}{2}\right)^n=1,
\end{align}
where all of the tricks here follow similarly from the discussion following Eq. \eq{obj}.
\end{proof}
\lem{lemone} now follows from \lem{bigidea} along with a simple inductive argument.

\bibliography{wavelet2}
\bibliographystyle{plain}

\end{document}